\newtheorem{theorem}{Theorem}
\newtheorem{lemma}{Lemma}
\newcommand{\RR}{\mathbb{R}} 
\newcommand{\eps}{\varepsilon}
\def\N{\mathcal N}
\newcommand{\diam}{{\rm diam}}
\newcommand{\len}{{\rm len}}
\newcommand{\etal}{{et~al.}}
\newcommand{\ie}{{i.e.}}
\newcommand{\eg}{{e.g.}}
\newcommand{\pdv}[2]{\frac{\partial #1}{\partial #2}}
\newcommand{\later}[1]{}
\newcommand{\old}[1]{}
\title{\textsc{On the Longest Spanning Tree with Neighborhoods}}
\author{
Ke Chen\thanks{%
Department of Computer Science,
University of Wisconsin--Milwaukee, USA\@. 
Email:~\texttt{kechen@uwm.edu}}
\and
Adrian Dumitrescu\thanks{%
Department of Computer Science,
University of Wisconsin--Milwaukee, USA\@.
Email:~\texttt{dumitres@uwm.edu}}}
\begin{document}

\maketitle

\begin{abstract}
We study a maximization problem for geometric network design.
Given a set of $n$ compact neighborhoods in $\RR^d$, select a point in each neighborhood,
so that the longest spanning tree on these points (as vertices) has maximum length. 
Here we give an approximation algorithm with ratio $0.511$, 
which represents the first, albeit small, improvement beyond $1/2$.
While we suspect that the problem is NP-hard already in the plane,
this issue remains open.

\medskip
\textbf{\small Keywords}: Maximum (longest) spanning tree, neighborhood,
geometric network, metric problem, approximation algorithm. 

\end{abstract}

\section{Introduction} \label{sec:intro}

In the \emph{Euclidean Maximum Spanning Tree Problem}  ({\sc EMST}),
given a set of points in the Euclidean space $\RR^d$, $d \geq 2$,
one seeks a tree that connects these points (as vertices) 
and has maximum length. The problem is easily solvable in polynomial time
by Prim's algorithm or by Kruskal's algorithm; algorithms that
take advantage of the geometry are also available~\cite{MPSY90}.

In this paper we study a natural generalization of the above problem.
In the \emph{Longest Spanning Tree with Neighborhoods} ({\sc Max-St-N}),
each point is replaced by a point-set, called \emph{neighborhood} (or \emph{region}),
and the tree must connect $n$ representative points, one chosen from each neighborhood
(duplicate representatives are allowed), and the tree has maximum length. 
The tree edges are straight line segments connecting pairs of points in distinct
neighborhoods. For obvious reasons we refer to these edges as \emph{bichromatic}.
As one would expect, the difficulty lies in choosing the representative points;
once these points are selected, the problem is reduced to the graph setting and
is therefore easily solvable. An example of a spanning tree for $10$ neighborhoods
is shown in Figure~\ref{fig:example1}. 

\begin{figure}[hbtp]
  \centering
  \includegraphics[scale=0.45]{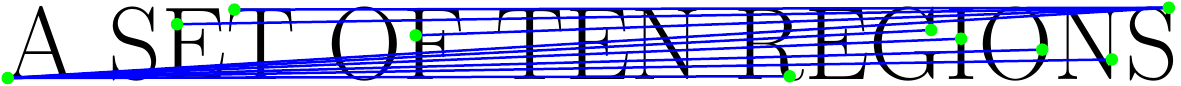}
\caption{An example of a long (still suboptimal) spanning tree for $10$ neighborhoods
  $\N=\{\text{A}, \text{S} \cup \text{S}, \text{E}\cup\text{E} \cup \text{E},
  \text{T} \cup \text{T}, \text{O} \cup \text{O}, \text{F}, \text{N} \cup \text{N},
  \text{R}, \text{G}, \text{I}\}$ (five neighborhoods are disconnected).
  The blue segments form a spanning tree on $\N$ and the green dots are the chosen
  representative points.}
\label{fig:example1}
\end{figure}

The input $\N$ consists of $n$ (possibly disconnected) neighborhoods. For simplicity,
it is assumed that each neighborhood is a union of polyhedra and the total vertex complexity
of the input is $N$. 

\paragraph{The greedy algorithm.}
A (natural) greedy algorithm chooses two points attaining a maximum inter-point distance
with points in distinct neighborhoods as representatives, and then repeatedly chooses
a point in another neighborhood as far as possible from some representative point.
The above algorithm does not necessarily find an optimal tree.
Let $\N=\{X_1,X_2,X_3\}$, where $X_1=\{a,c\}$, $X_2=\{b,c\}$, $X_3=\{d\}$, 
$\Delta{abc}$ is a unit equilateral triangle and $d$ is the midpoint of $ab$; 
see Figure~\ref{fig:example2}. 
Here the selection $a \in X_1$, $b \in X_2$, $d \in X_3$ yields a spanning tree in the form
of a star centered at $a$ of length $|ab| + |ad|=3/2$
(the edge lengths are $1$, $1/2$, and $1/2$ in the underlying complete graph).
On the other hand, selecting vertices $c \in X_1$, $c \in X_2$, $d \in X_3$ yields
a spanning tree in the form of a $2$-edge star centered at $d$
of length $|dc|+|dc|=2 \times \sqrt{3}/2 = \sqrt{3}$ which is better
(the edge lengths are $\sqrt{3}/2$, $\sqrt{3}/2$, and $0$ in the underlying complete graph).

\begin{figure}[hbtp]
  \centering
  \includegraphics[scale=0.45]{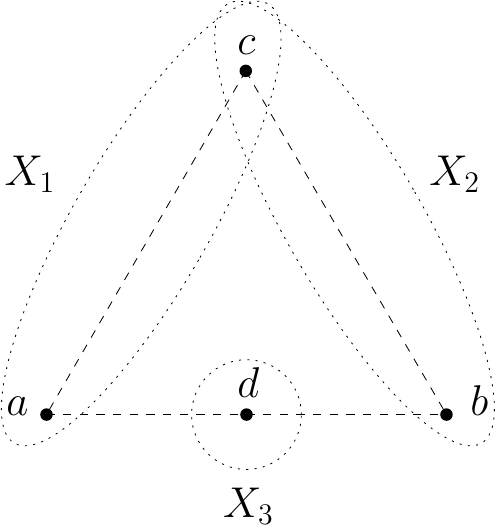}
\caption{Left: an example on which the greedy algorithm is suboptimal.}
\label{fig:example2}
\end{figure}

\paragraph{Definitions and notations.}
A {\em geometric graph} $G$ is a graph whose vertex set is a finite set of points in 
$\RR^d$ and whose edges consist of straight line segments~\cite[p.~223]{PA95}.
For two points $p, q\in \RR^d$, the Euclidean distance between them is denoted by $|pq|$.
The {\em length} of $G$, denoted by $\len(G)$, is the sum of the Euclidean lengths of all
edges in $G$. 

For a neighborhood $X \in \N$, let $V(X)$ denote its set of vertices.
Let $V=\cup_{X \in \N} V(X)$ denote the union of vertices of all neighborhoods in $\N$
and $N=|V|$. 

Given a set $\N$ of $n$ neighborhoods, we define the following parameters.
A \emph{monochromatic diameter} pair is a pair of points in the same neighborhood
attaining a maximum distance. 
A \emph{bichromatic diameter} pair is a pair of points from two neighborhoods
attaining a maximum distance, \ie, $p_i \in X_i$, $p_j \in X_j$, where
$X_i,X_j \in \N$, $i \neq j$, and $|p_ip_j|$ is maximum.
A \emph{diameter} pair is a pair of points (in the same neighborhood or in
different neighborhoods) attaining an overall maximum distance.

For $X \in \N$ and $p \in X$, let $d_\textrm{max}(p)$ denote the maximum distance
between $p$ and any point of a neighborhood $Y \in \N \setminus \{X\}$.
It is well known and easy to prove that both a monochromatic diameter and a bichromatic
diameter pair are attained by pairs of vertices in the input instance. 
An optimal (longest) spanning tree with neighborhoods is denoted by $T_\textrm{OPT}$;
it is a geometric graph whose vertices are the representative points of the $n$ neighborhoods.

\paragraph{Our results.}
We start by providing a factor $1/2$ approximation to {\sc Max-St-N}. We then
offer two refinement steps achieving a better ratio. The last refinement step
proves the following. 

\begin{theorem} \label{thm:1}
  Given a set $\N$ of $n$ neighborhoods in $\RR^d$ (with total vertex complexity $N$),
  a ratio $0.511$ approximation for the maximum spanning tree for the neighborhoods in $\N$
  can be computed in polynomial time. 
\end{theorem}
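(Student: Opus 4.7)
The plan is to build the algorithm in three stages---a baseline factor-$\tfrac12$ scheme followed by two successive refinements---and to analyze the refinements through a dichotomy on how tight the baseline is against the optimum.

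\textbf{Baseline.} I would first locate a pair $p \in X_i$, $q \in X_j$ with $i \neq j$ attaining the neighborhood-diameter
$D := \max_{i' \neq j'}\,\max_{x \in X_{i'},\, y \in X_{j'}} |xy|$,
which is polynomial-time on polyhedral input since $D$ is attained at a vertex pair. Set $v_i := p$ and $v_j := q$, and for each other index $k$ pick $v_k \in X_k$ maximizing $\max(|v_k p|, |v_k q|)$. Output the ``double star'' consisting of the edge $pq$ together with, for each $k \notin \{i, j\}$, an edge joining $v_k$ to whichever of $p, q$ is farther. The triangle inequality yields $\max(|v_k p|, |v_k q|) \ge D/2$, so the tree has length $\ge D + (n-2)D/2 = nD/2$; since every tree edge has length $\le D$ we also have $\opt \le (n-1)D$, and the ratio is $\ge n/(2(n-1)) \ge \tfrac12$.

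\textbf{First refinement.} Fix a threshold $\alpha \in (\tfrac12, 1)$. If the double-star length already exceeds $\alpha(n-1)D$ we are done. Otherwise $\sum_{k \notin \{i,j\}} \max(|v_k p|, |v_k q|)$ is close to its worst-case value $(n-2)D/2$, which by the parallelogram identity $|xp|^2 + |xq|^2 = 2|xm|^2 + D^2/2$ (with $m$ the midpoint of $pq$) confines the corresponding neighborhoods to a thin slab around the perpendicular bisector of $pq$: every point $x$ with $\max(|xp|,|xq|) \le D/2+\eps$ satisfies $|xm| \le \sqrt{D\eps+\eps^2}$. A direct upper bound on the length of any spanning tree whose vertices lie in such a slab then tightens the bound on $\opt$, producing a ratio strictly above $\tfrac12$.

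\textbf{Second refinement (proving Theorem~\ref{thm:1}).} Add a second candidate and return the longer of the two trees. The candidate is designed to beat the double star exactly when the slab regime above is tight: pick a ``third pole'' $v_\ell \in X_\ell$ maximizing $|v_\ell m|$, and grow a tree centered at $v_\ell$ in which every other representative is selected to maximize its distance either to $v_\ell$ alone or jointly to $\{p, q, v_\ell\}$. When the double star is near-tight, the slab guarantees that some $|v_\ell m|$ is large and hence the third-pole tree is long; when the double star is loose, it already beats the target ratio. The main obstacle is the quantitative tuning: choosing $\alpha$ and the exact form of the third-pole tree so that, for every input, the better of the two candidates has length $\ge 0.511\cdot\opt$. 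I expect this to reduce to a low-dimensional optimization over extremal configurations of the diameter pair, the midpoint slab, and the third pole, with the numerical value $0.511$ emerging as the solution.
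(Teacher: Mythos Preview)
Your baseline matches the paper, but both refinements have real gaps.

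The crux is the sentence ``when the double star is near-tight, the slab guarantees that some $|v_\ell m|$ is large.'' This is backwards: near-tightness means $\max(|v_k p|,|v_k q|)\approx D/2$ for most $k\notin\{i,j\}$, and your own parallelogram identity then forces those entire neighborhoods \emph{close} to $m$. So for $\ell\notin\{i,j\}$ the quantity $|v_\ell m|$ is small, and the third-pole star gains nothing over the double star. The only way the third pole can be far from $m$ is if it lies in $X_i$ or $X_j$, and here you are missing the paper's key second ingredient: the \emph{monochromatic} diameter $1+y$ (the largest diameter of a single region). It plays two roles. First, a star centered at one endpoint of a monochromatic diameter pair is itself a candidate of length $\ge(n-1)(1+y)/2$. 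Second, it yields a global containment radius $R(y)=\sqrt3/2+\tfrac{2}{\sqrt3}\max(y,0)$ about $m$ (any point of $X_2$ is within $1+y$ of $b$, any point of another region is within $1$ of $b$), which in turn gives $\len(T_\textrm{OPT})\le\tfrac{n-1}{2}\bigl(1+x+R(y)\bigr)$, where $x$ is the radius of a disk about $m$ containing half the regions. Your proposal has no analogue of $y$, so you cannot upper-bound $\opt$ once $X_i$ or $X_j$ is elongated. The paper's near-tight example makes the failure concrete: $X_1,X_2$ are segments of length $1-\eps$ meeting at a point $c$ with $|cm|\approx\sqrt3/2$, the remaining regions are single points near the perpendicular bisector, and $\opt$ is a star from $c$ of length $\approx n-1$, while any tree forced to use $a,b$ as the $X_1,X_2$ representatives achieves only $\approx 0.517(n-1)$.

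A smaller issue: in your first refinement, the slab confines only the $X_k$ with $k\notin\{i,j\}$, so ``a spanning tree whose vertices lie in such a slab'' is not what $\opt$ is; $\opt$ may place its $X_i,X_j$ representatives anywhere in those regions, and in the example above it does exactly that. Finally, the constant $0.511$ does not simply ``emerge'': the paper obtains it only after an additional averaging argument (replacing the single radius $x$ by the individual distances $|op_i|$ and applying Jensen's inequality), a substitution $z=\lambda x$, and a piecewise two-variable optimization over $(z,y)$ that you would still have to carry out.
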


It is natural to try to include long edges (with endpoints in different neighborhoods)
when constructing a long spanning tree. However, in this regard
we show that every algorithm that always includes a bichromatic
diameter pair in the solution 
is bound to have an approximation ratio at most $\sqrt{2-\sqrt3}=0.517\ldots$
(via Figure~\ref{fig:eg} in Section~\ref{sec:analysis}).

\paragraph{Background and related work.}
Computing the minimum or maximum Euclidean spanning trees of a point set 
are classical problems in a geometric setting~\cite{MPSY90,PS85}.
The Traveling Salesman Problem ({\sc TSP}) is yet another related problem 
with a rich history of research in combinatorial optimization.
Several variants of the TSP including
the \emph{Euclidean Traveling Salesman Problem} ({\sc ETSP}) and
\emph{Maximum Traveling Salesman Problem} ({\sc MAX TSP})
are surveyed in~\cite{Ep00,Mi00,Mi17}.

While past research has primarily focused on minimization problems,
the maximization variants usually require different techniques and so they
are interesting in their own right and pose many unmet challenges.
See for instance the section devoted to longest subgraph problems in the survey of
Bern and Eppstein~\cite{BE97}.
The results obtained in this area in the last $20$ years are rather sparse;
the few articles~\cite{BFJ+03,DT10,Fe99} make a representative sample.
Recently, Biniaz~\etal~\cite{BBC+18} gave several approximation algorithms
for computing a longest noncrossing spanning tree in a multipartite geometric graph. 

Spanning trees for systems of neighborhoods have also been studied.
For instance, given a set of $n$ (possibly disconnected) compact neighborhoods in $\RR^d$,
select a point in each neighborhood so that the minimum spanning tree on these points
has minimum length~\cite{DFH+15,YLXX07}, or maximum length~\cite{DFH+15},
respectively. In the cycle version first studied by Arkin and Hassin~\cite{AH94},
called \emph{TSP with neighborhoods} ({\sc TSPN}), given a set of neighborhoods in $\RR^d$,
one needs to find a shortest closed curve (tour) intersecting each neighborhood.

\paragraph{Organization.} The rest of the paper is organized as follows.
Section~\ref{sec:approx} presents two approximation algorithms: one with ratio $0.5$
(in Subsection~\ref{ssec:a1}) and one with ratio $0.511$ (in Subsection~\ref{ssec:a2}).
The analysis of the latter algorithm is carried out in Section~\ref{sec:analysis}. 
We conclude in Section~\ref{sec:conclusion} with a summary of the results
in the context of related problems and some future research directions.

\section{Approximation Algorithms} \label {sec:approx}

For simplicity, we present our algorithms for the plane \ie, $d=2$.
The extension to higher dimensions is straightforward, and is briefly
discussed at the end.

Let $S=\{p_1,\ldots,p_n\}$, where $p_i=(x_i,y_i)$.
Given a point $p\in S$, the \emph{star centered at} $p$,
denoted by $S_p$, is the spanning tree on $S$ whose edges connect $p$ to the other points.
Using a technique developed in~\cite{DT10} (in fact a simplification of an earlier approach
from~\cite{ARS95}), we first obtain an approximation algorithm with ratio~$1/2$
({\tt Algorithm A1}). {\tt Algorithm A2} described later in this section implements
a refinement of this technique.

\subsection{A Simple $0.5$-Approximation Algorithm} \label{ssec:a1} 

\paragraph{{\tt Algorithm A1}.} 
Compute a bichromatic diameter of the point set $V$,
pick an arbitrary point (vertex) from each of the other $n-2$ neighborhoods, 
and output the longest of the two stars centered at one of the endpoints of the diameter.

\paragraph{Analysis.} Let $ab$ be a bichromatic diameter pair, and assume without loss
of generality that $ab$ is a horizontal unit segment, where $a=(0,0)$ and $b=(1,0)$.
We may assume that $a \in X_1$ and $b \in X_2$; refer to Figure~\ref{fig:f1}. 
\begin{figure}[hbtp]
  \centering
  \includegraphics[scale=0.5]{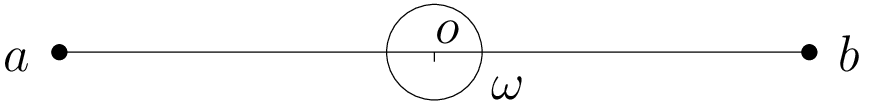}
\caption{A bichromatic diameter pair $a,b$ and the disk $\omega$.}
\label{fig:f1}
\end{figure}

The ratio $1/2$ (or $\frac{n}{2n-2}$ which is slightly better) follows from the next lemma
in conjunction with the obvious upper bound 
\begin{equation} \label{eq:trivial-ub}
\len(T_\textrm{OPT}) \leq n-1.
\end{equation}
The latter is implied by the fact that each edge of $T_\textrm{OPT}$ is bichromatic
and thus of length at most~$1$.

\begin{lemma} \label{lem:2stars}
Let $S_a$ and  $S_b$ be the stars centered at the points $a$ and $b$,
respectively. Then $ \len(S_a) + \len(S_b) \geq n$.
\end{lemma}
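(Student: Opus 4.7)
The plan is to reduce the inequality to a sum of $n-2$ triangle-inequality estimates, one per non-diameter region. First I would fix representatives: let $a \in X_1$, $b \in X_2$ be the chosen endpoints of the bichromatic diameter, and let $p_3,\ldots,p_n$ be the (arbitrarily chosen) representatives in the remaining regions $X_3,\ldots,X_n$. Writing out the two stars explicitly, the edge set of $S_a$ consists of the segment $ab$ together with segments $ap_i$ for $i=3,\ldots,n$, and similarly for $S_b$, so
\begin{equation*}
\len(S_a) + \len(S_b) \;=\; 2\,|ab| + \sum_{i=3}^{n}\bigl(|ap_i| + |bp_i|\bigr).
\end{equation*}

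Next, for each index $i \in \{3,\ldots,n\}$, I would apply the triangle inequality to the (possibly degenerate) triangle $ap_ib$, which gives $|ap_i| + |bp_i| \geq |ab| = 1$. Summing these $n-2$ inequalities and adding $2\,|ab| = 2$ yields
\begin{equation*}
\len(S_a) + \len(S_b) \;\geq\; 2 + (n-2) \;=\; n,
\end{equation*}
as required.

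There is no real obstacle: the bound follows from the triangle inequality alone, and the diameter property of the pair $(a,b)$ is not needed for this step (it is what drives the later upper-bound comparison with $\len(T_{\textrm{OPT}}) \leq n-1$). The only thing to be mindful of is bookkeeping, namely that the segment $ab$ belongs to both $S_a$ and $S_b$, which is why the factor of $2\,|ab|$ appears on the right-hand side of the opening identity.
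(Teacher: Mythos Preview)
Your argument is correct and mirrors the paper's own proof essentially line for line: write $\len(S_a)+\len(S_b)=2|ab|+\sum_{i=3}^n(|ap_i|+|bp_i|)$, apply the triangle inequality $|ap_i|+|bp_i|\geq |ab|=1$ to each of the $n-2$ terms, and conclude $\len(S_a)+\len(S_b)\geq 2+(n-2)=n$. Your side remarks (that the diameter property of $(a,b)$ is not used here, and that $ab$ is counted in both stars) are accurate and helpful.
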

\begin{proof}
Assume that $a=p_1$, $b=p_2$. For each $i=3,\ldots,n$, the triangle inequality
for the triple $a,b,p_i$ gives
$$ |ap_i| + |b p_i| \geq |ab|=1. $$
By summing up we have
\begin{equation*}
\len(S_a) + \len(S_b)= \sum_{i=3}^n (|ap_i| + |b p_i|) + 2|ab|
\geq (n-2)+2=n.
\qedhere
\end{equation*}
\end{proof}

\subsection{An Improved $0.511$-Approximation Algorithm} \label{ssec:a2} 

We next refine the previous algorithm to achieve an approximation ratio of $0.511$.
The setting is the same, where $ab$ is a bichromatic diameter pair of unit length.
The technique uses two parameters $x$ and $y$, introduced below. The smallest value
of the ratio obtained over the entire range of admissible $x$ and $y$ is determined
and yields the approximation ratio of {\tt Algorithm~A2}.

Let $o$ be the midpoint of $ab$, and $\omega$ be the disk centered at $o$,
of minimum radius, say, $x$, containing at least $\lfloor n/2 \rfloor$
of the neighborhoods $X_3,\ldots,X_n$. In particular, this implies that
we can consider $\lfloor n/2 \rfloor$ neighborhoods as contained in $\omega$
and $\lceil n/2 \rceil$ neighborhoods having points on the boundary $\partial \omega$
or in the exterior of  $\omega$. We first argue that for $x \geq 0.2$,
the $0.511$ approximation ratio easily follows (with room to spare).
Observe that for each of the neighborhoods not contained in $\omega$, one of the connections
from an arbitrary point of the neighborhood to $a$ or $b$ is at least $\sqrt{\frac14 +x^2}$.
Let $T$ be the spanning tree consisting of all such longer connections together with $ab$.
Then,

\begin{equation} \label{eq:T}
  \len(T) \geq 1+ \left \lfloor \frac{n}{2} \right \rfloor \frac12 +
  \left( \left\lceil \frac{n}{2} \right\rceil -2\right) \sqrt{\frac14 +x^2}.
\end{equation}

The above expression can be simplified as follows.
{\large
\begin{equation} \label{eq:T-even-odd}
\len(T) \geq 
\begin{cases}
1+ \frac{n}{4} + \left(\frac{n}{4} -1 \right) \sqrt{1+4x^2}, &
\text{ if } n \text{ is even},  \vspace{8pt} \\

1+ \frac{n-1}{4} + \left( \frac{n+1}{4} -1 \right) \sqrt{1+4x^2}, &
\text{ if } n \text{ is odd}. 
\end{cases}
\end{equation}
}

Assume first that $x \geq 1/2$. Then,
\begin{align} \label{eq:T:even1}
\len(T) &\geq 1+ \frac{n}{4} + \left(\frac{n}{4} -1 \right) \sqrt{2}
=\left(\frac14 + \frac{\sqrt2}{4} \right) (n-1) + \left(\frac{5-3\sqrt2}{4}\right)
\nonumber \\
&\geq \left(\frac{1+ \sqrt2}{4} \right) (n-1),
  \text{ if } n \text{ is even}, \text{ and}\\ 
\len(T) &\geq   1+ \frac{n-1}{4} + \left( \frac{n+1}{4} -1 \right) \sqrt{2}
=\left(\frac14 + \frac{\sqrt2}{4} \right) (n-1) + \left( 1 - \frac{\sqrt2}{2} \right)
\nonumber \\
&\geq \left(\frac{1+\sqrt2}{4} \right) (n-1),
\text{ if } n \text{ is odd}. 
\end{align}   

Together with~\eqref{eq:trivial-ub}, we have $\len(T) \geq 0.6 \, \len(T_\textrm{OPT})$,
for every $n \geq 2$ and $x \geq 1/2$.

\medskip
Assume next that $x \leq 1/2$. 
If $n$ is even, \eqref{eq:T-even-odd} yields
\begin{align*}
\len(T) &\geq 1+ \frac{n}{4} + \left(\frac{n}{4} -1 \right) \sqrt{1+4x^2}\\
&=\frac{n-1}{4}\left(1+\sqrt{1+4x^2}\right) + \left(\frac54 - \frac34 \sqrt{1+4x^2}\right)\\
&\geq \frac{n-1}{4}\left(1+\sqrt{1+4x^2}\right).
\end{align*}
Indeed, the last term in the second line is non-negative for $x \leq 1/2$.

If $n$ is odd, \eqref{eq:T-even-odd} yields
\begin{align*}
\len(T) &\geq 1+ \frac{n-1}{4} + \left( \frac{n+1}{4} -1 \right) \sqrt{1+4x^2}\\ 
&=\frac{n-1}{4} \left(1+\sqrt{1+4x^2}\right) + \left(1 - \frac12 \sqrt{1+4x^2}\right)\\
&\geq \frac{n-1}{4} \left(1+\sqrt{1+4x^2}\right).
\end{align*}
Again, the last term  in the second line is non-negative for $x \leq 1/2$.

Consequently, for every $n \geq 2$ and $x \leq 1/2$ we have
\begin{equation} \label{eq:T:x}
\len(T) \geq \frac{n-1}{4}\left(1+\sqrt{1+4x^2}\right).
\end{equation}

It is easy to check that
\begin{equation} \label{eq:0.519}
  \frac{1+\sqrt{1+4x^2}}{4}  \geq \frac{5+\sqrt{29}}{20}= 0.519\ldots,
  \text{ for } x \geq 0.2. 
\end{equation}

Hence the approximation ratio is at least $0.519\ldots$ if $x \geq 0.2$.
We therefore subsequently assume that $x \leq 0.2$.
Let the monochromatic diameter of $V$ be $1+y$, for some $y \in [-1,\infty)$.
The next lemma shows that $y \leq 1$, and so the monochromatic diameter of $V$
is $1+y$, for some $y \in [-1,1]$.

\begin{lemma} \label{lem:y}
For every $X \in \N$, $\diam(X) \leq 2$.
\end{lemma}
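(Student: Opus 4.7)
The plan is to argue by a simple triangle-inequality comparison to the bichromatic diameter $|ab|=1$. The key point to exploit is that the pair $a,b$ realizes the \emph{bichromatic} diameter, so every pair of points drawn from two distinct regions has distance at most $1$. I will use this twice, via a ``detour'' through a witness point in a neighborhood other than $X$.

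First, I would fix a neighborhood $X\in\N$ and two points $p,q\in X$ achieving $\diam(X)$. Assuming $n\geq 2$ (otherwise the lemma is vacuous, since there is nothing to connect), pick any neighborhood $Y\in\N\setminus\{X\}$ and any point $r\in Y$. Since $p\in X$ and $r\in Y$ lie in different regions, the defining property of the bichromatic diameter gives $|pr|\leq |ab|=1$, and likewise $|qr|\leq 1$. The triangle inequality applied to $p,q,r$ then yields
\begin{equation*}
\diam(X)=|pq|\leq |pr|+|qr|\leq 1+1=2,
\end{equation*}
which is the claimed bound.

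There is no real obstacle here; the only things to check are that at least one other neighborhood exists (handled by assuming $n\geq 2$) and that $X$ may itself be $X_1$ or $X_2$, which is harmless because in that case we may take $Y$ to be the other of $\{X_1,X_2\}$ (or any third neighborhood). Once Lemma~\ref{lem:y} is in place, the parameter $y$ introduced just before the statement is automatically confined to $[-1,1]$, as advertised in the surrounding discussion.
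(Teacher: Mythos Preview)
Your proof is correct and essentially identical to the paper's own argument: pick a diameter pair $p,q$ of $X$, any point $r$ in some other neighborhood, and apply the triangle inequality together with the bichromatic-diameter bound $|pr|,|qr|\le 1$. The only differences are cosmetic---you spell out the $n\ge 2$ assumption and the harmless case $X\in\{X_1,X_2\}$, which the paper leaves implicit.
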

\begin{proof}
  Let $pq$ be a diameter pair of neighborhood $X$. 
  Let $r$ be an arbitrary point of an arbitrary neighborhood $Y \in \N \setminus \{X\}$.
  By the triangle inequality, we have $|pq| \leq |pr|+|rq| \leq 1+1=2$, as required.
\end{proof}

  If $y \geq 0.2$, let $a_1,b_1 \in X$ be a corresponding diameter pair.
  Choose a point in every other neighborhood and connect it to $a_1$ and $b_1$. 
  Since $|a_1 b_1|= 1+y \geq 1.2$, the longer of the two stars centered
  at $a_1$ and $b_1$ has length at least $(n-1)(1+y)/2 \geq 0.6(n-1)$; this
  candidate spanning tree offers thereby this ratio of approximation.
  We will subsequently assume that $y \in [-1,0.2]$.

  We have thus shown that a constant approximation ratio better than $0.511$ can be obtained
  if $x$ or $y$ is sufficiently large. In the complementary case, \ie, both $x$ and $y$ are small, 
  we apply the following algorithm.

\paragraph{{\tt Algorithm A2}.} 
The algorithm computes two candidate solutions $T_1$ and $T_2$ and returns the best
of the two. The setting is the same as in Subsection~\ref{ssec:a1}.
In particular, it is assumed that $x \in [0,0.2]$ and $y \in [-1,0.2]$
(outside this range, the approximation ratio exceeds $0.519$). 

The first candidate solution $T_1$ for the spanning tree is only relevant
for the range $y \geq 0$ (if $y<0$ its length could be smaller than $(n-1)/2$
and $T_1$ will be ignored). 
Suppose that a monochromatic diameter pair in $V$ is achieved by a pair
$a_1,b_1 \in X$. Recall that $|a_1b_1|=1+y$.
Choose an arbitrary point in every other neighborhood and connect it to $a_1$ and $b_1$. 
Let $T_1$ be the longer of the two stars centered at $a_1$ and $b_1$.

The second candidate solution $T_2$ for the spanning tree connects each of the
neighborhoods contained in $\omega$ with either $a$ or $b$ at a cost of at least $1/2$
(based on the fact that $\max\{|a p_i|, |b p_i|\} \geq |ab|/2=1/2$). 
For each neighborhood $X_i$, $i \geq 3$, select the vertex of $X_i$ that is farthest from $o$
and connect it with $a$ or $b$, whichever yields the longer connection.
As such, if $X_i$ is not contained in $\omega$, the connection length is at least
$\sqrt{\frac14 +x^2}$. Finally add the unit segment $ab$.

\paragraph{Lower bounds on the lengths of candidate solutions.} 
By the triangle inequality, the length of the star $T_1$ is bounded from below as follows:
\begin{equation} \label{eq:T1}
\len(T_1) \geq (n-1) \, \frac{1+y}{2}. 
\end{equation}

The length of $T_2$ is bounded from below by~\eqref{eq:T:x}. As such, we have
\begin{equation} \label{eq:T2}
\len(T_2) \geq \frac{n-1}{4}\left(1+\sqrt{1+4x^2}\right).
\end{equation}

\smallskip
  In order to prove the claimed approximation ratio $0.511$ for {\tt Algorithm A2},
  we first derive a sharper upper bound on the length of $T_\textrm{OPT}$
  when both $x$ and $y$ are smaller than $0.2$.

\subsection{Upper bound on $\len(T_\textrm{OPT})$} \label{ssec:ub-topt}

Let $\Omega$ be the disk of radius $R(y)$ centered at $o$, where
\[ R(y)= \begin{cases} \frac{\sqrt3}{2} &\mbox{if } y \leq 0 \\
\frac{\sqrt3}{2} +\frac{2}{\sqrt3} y & \mbox{if }  y \geq 0
\end{cases}
\]

\begin{lemma} \label{lem:Omega}
$V$ is contained in $\Omega$.
\end{lemma}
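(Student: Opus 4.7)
The plan is to prove the pointwise bound $|or|\le R(y)$ for every vertex $r\in V$, using only two structural facts available at this stage: $|ab|=1$ is the bichromatic diameter, and every monochromatic diameter is at most $1+y$. My key tool will be the parallelogram identity at the midpoint $o$ of $ab$: for any point $r$,
\[
|ra|^2+|rb|^2 \;=\; 2|or|^2+2|oa|^2 \;=\; 2|or|^2+\tfrac12,
\]
so every joint upper bound on $|ra|^2+|rb|^2$ converts directly into one on $|or|$.

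I would then split into cases according to the region containing $r$. If $r$ lies in some $X\in\N\setminus\{X_1,X_2\}$, both $(r,a)$ and $(r,b)$ are bichromatic pairs, hence $|ra|,|rb|\le 1$ and the identity gives $|or|^2\le 3/4$, so $|or|\le \sqrt3/2\le R(y)$ (the second inequality holds because $R$ is nondecreasing in $y$). If instead $r\in X_1$, then $r,a$ lie in the same region while $r,b$ lie in distinct regions, so $|ra|\le\diam(X_1)\le 1+y$ and $|rb|\le 1$; the case $r\in X_2$ is symmetric.

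Substituting into the parallelogram identity in these latter two cases yields
\[
|or|^2 \;\le\; \frac{(1+y)^2+1-\tfrac12}{2} \;=\; \frac{(1+y)^2}{2}+\frac14.
\]
For $y\le 0$ one has $(1+y)^2\le 1$, so $|or|^2\le 3/4=R(y)^2$. For $y\ge 0$ I would square the linear expression for $R(y)$ to obtain $R(y)^2=\tfrac34+2y+\tfrac43 y^2$; subtracting then gives $R(y)^2-|or|^2\ge y+\tfrac56 y^2\ge 0$, closing the lemma. The whole argument is more bookkeeping than insight: the only real idea is invoking the parallelogram identity to translate the two distance constraints into a bound on $|or|$, and the only mild subtlety is tracking which endpoint of $ab$ shares a region with $r$, which is exactly what forces the piecewise definition of $R(y)$.
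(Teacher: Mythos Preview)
Your proof is correct. Both your argument and the paper's hinge on the same median--distance relation at $o$, but you package it differently. The paper argues by contradiction: assuming $|op_i|>R(y)$ and taking without loss of generality $|ap_i|\le|bp_i|$, it deduces $|bp_i|>\sqrt{R(y)^2+\tfrac14}$, which exceeds $1$ (when $y\le0$) or $1+y$ (when $y\ge0$), contradicting either the bichromatic or the monochromatic diameter depending on whether $p_i\in X_2$. You instead bound $|or|$ directly from the parallelogram identity, splitting first on which region contains $r$ rather than on the sign of $y$; this avoids both the symmetry reduction and the contradiction, and it uses the two distance constraints $|ra|$ and $|rb|$ simultaneously rather than only the larger one. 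The two routes are equivalent in content---the paper's inequality $\max(|ar|,|br|)^2\ge|or|^2+\tfrac14$ is an immediate consequence of your identity---so neither gains anything the other lacks, but your direct computation is marginally cleaner.
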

\begin{proof}
  Assume for contradiction that there exists a point $p_i \in X_i$ at distance larger than
  $R(y)$ from $o$. By symmetry, we may assume that 
  $|ap_i| \leq |bp_i|$ and that $p_i$ lies in the closed halfplane above the line containing $ab$.

  First consider the case $y \leq 0$; it follows that
  $|b p_i| > \sqrt{\frac14 + \frac34} =1$.
  If $i=2$, then $b,p_i \in X_2$, which contradicts the definition of $y$.
  Otherwise, $b \in X_2$ and $p_i \in X_i$ are points in different neighborhoods at distance
  larger than $1$, in contradiction with the original assumption on the bichromatic
  diameter of $V$.   

  Next consider the case $y \geq 0$; it follows that
  $|b p_i| \geq \sqrt{\frac14 + \left(\frac{\sqrt3}{2} +\frac{2}{\sqrt3} y\right)^2} > 1+y$.
  If $i=2$, then $b,p_i \in X_2$, which contradicts the definition of $y$.
  Otherwise, $b \in X_2$ and $p_i \in X_i$ are points in different neighborhoods at distance
  larger than $1$, in contradiction with the original assumption on the bichromatic
  diameter of $V$.

 In either case (for any $y$) we have reached a contradiction, and this concludes the proof.
\end{proof}

Recall that for a point $p\in X\in \N$,
$d_\textrm{max}(p)$ is the maximum distance between $p$ and a point of a neighborhood 
$Y \in \N \setminus \{X\}$.
\begin{lemma} \label{lem:ub:generic}
  Let $\N=\{X_1,\ldots,X_n\}$ be a set of $n$ neighborhoods and $T_\textrm{OPT}$ be an
  optimal spanning tree assumed to connect points (vertices) $p_i \in X_i$ for $i=1,\ldots,n$. 
  For every $j \in [n]$, we have 
$$ \len(T_\textrm{OPT}) \leq \sum_{i \neq j} d_\textrm{max}(p_i). $$
\end{lemma}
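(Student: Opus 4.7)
The plan is to root the optimal tree at the chosen vertex $p_j$ and use a standard charging argument: assign to each non-root vertex $p_i$ the single edge connecting $p_i$ to its parent in the rooted tree, and then bound the length of that edge by $d_\textrm{max}(p_i)$.

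More concretely, first I would view $T_\textrm{OPT}$ as an abstract tree on the vertex set $\{p_1,\ldots,p_n\}$ and root it at $p_j$. In a rooted tree, every vertex except the root has a unique parent, so the edges of $T_\textrm{OPT}$ are in bijection with the non-root vertices via the parent map; write $\pi(i)$ for the index of the parent of $p_i$, for $i \neq j$. Then
\begin{equation*}
\len(T_\textrm{OPT}) \;=\; \sum_{i \neq j} |p_i\, p_{\pi(i)}|.
\end{equation*}

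The key (and essentially only) observation is that every edge of $T_\textrm{OPT}$ is bichromatic: $p_{\pi(i)}$ lies in $X_{\pi(i)}$, which is a neighborhood distinct from $X_i$. Hence by the definition of $d_\textrm{max}(p_i)$ as the maximum distance from $p_i$ to any point of any other neighborhood in $\N\setminus\{X_i\}$, we have $|p_i\, p_{\pi(i)}| \leq d_\textrm{max}(p_i)$ for every $i \neq j$. Summing these inequalities yields the claimed bound. There is no real obstacle here; the only thing to be slightly careful about is that the charging is well-defined, which is precisely why rooting the tree (so that each non-root vertex contributes exactly one parent edge and the sum has the correct number of $n-1$ terms) is used rather than, say, summing over both endpoints of each edge.
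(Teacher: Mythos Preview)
Your argument is correct and is essentially the same as the paper's own proof: root $T_{\mathrm{OPT}}$ at $p_j$, charge each edge to its non-root endpoint via the parent map, and bound each edge length by $d_{\max}$ of that endpoint using that the edge is bichromatic.
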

\begin{proof}
Consider $T_\textrm{OPT}$ rooted at $p_j$. Let $\pi(v)$ denote the
parent of a (non-root) vertex $v$. Uniquely assign each
edge $\pi(v) v$ of $T_\textrm{OPT}$ to vertex $v$. The inequality 
$\len(\pi(v) v) \leq d_\textrm{max}(v)$ holds for each edge of the tree.
By adding up the above inequalities,
the lemma follows.
\end{proof}
\begin{lemma} \label{lem:3}
  If $X \in \N$ is contained in $\omega$, and $p \in X$, then
  $d_\textrm{max}(p) \leq \min(1,x+R(y))$. 
\end{lemma}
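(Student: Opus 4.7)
The plan is to establish two separate upper bounds on $d_\textrm{max}(p)$, namely $d_\textrm{max}(p) \leq 1$ and $d_\textrm{max}(p) \leq x + R(y)$, and then take their minimum to obtain the stated inequality. Both bounds will follow from fairly short geometric arguments: the first uses the bichromatic diameter, the second uses a triangle inequality routed through $o$ combined with the previous lemma that $V \subseteq \Omega$.

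For the first bound, I would argue directly from the normalization $|ab| = 1$. The quantity $d_\textrm{max}(p)$ is a supremum of $|pq|$ taken over points $q$ lying in some neighborhood $Y \neq X$. Every such pair $(p,q)$ consists of points in two distinct regions, so $|pq|$ is bounded above by the bichromatic diameter of $\N$. Since we assumed that this bichromatic diameter equals $|ab|=1$, we immediately get $d_\textrm{max}(p) \leq 1$. This step uses nothing beyond the standing assumptions.

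For the second bound, I would route through $o$, the midpoint of $ab$. Since $X \subseteq \omega$ by hypothesis and $\omega$ is the disk of radius $x$ centered at $o$, the point $p$ satisfies $|op| \leq x$. For the other endpoint $q$ realizing the maximum, the natural move is to argue that $d_\textrm{max}(p)$ is in fact attained at a vertex $q^\ast \in V$: for each neighborhood $Y$, which is a union of polyhedral pieces, the convex function $q \mapsto |pq|$ achieves its maximum on each piece at a vertex, so the overall sup is attained at some vertex of some $Y$. By Lemma~\ref{lem:Omega}, $V \subseteq \Omega$, so $|o q^\ast| \leq R(y)$. The triangle inequality then yields
\[
d_\textrm{max}(p) = |pq^\ast| \leq |op| + |oq^\ast| \leq x + R(y),
\]
and combining with the first bound gives the claim.

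The only step that requires any care is the vertex-attainment observation for $q^\ast$; this is what lets us apply Lemma~\ref{lem:Omega}, which is stated only for the vertex set $V$ and not for arbitrary points inside the neighborhoods. Once that is granted, the rest of the proof is a one-line application of the triangle inequality together with the two bounds $|op| \leq x$ and $|oq^\ast| \leq R(y)$. I do not anticipate any technical obstacle beyond this bookkeeping.
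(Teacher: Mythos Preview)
Your proposal is correct and follows essentially the same route as the paper: bound $d_{\max}(p)$ by $1$ via the bichromatic diameter, and by $x+R(y)$ via the triangle inequality through $o$ using $|op|\leq x$ and Lemma~\ref{lem:Omega}. The only cosmetic difference is that the paper passes from $V\subseteq\Omega$ directly to ``all neighborhoods are contained in $\Omega$'' (using convexity of $\Omega$ and that each polyhedral piece lies in the convex hull of its vertices), whereas you instead argue that the maximizing $q^\ast$ is attained at a vertex; both justifications are fine.
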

\begin{proof}
  By definition,  $d_\textrm{max}(p) \leq 1$. 
  By Lemma~\ref{lem:Omega}, the vertex set $V$ is contained in $\Omega$
  and thus all neighborhoods in $\N$ are contained in $\Omega$.
  By the triangle inequality,
  $d_\textrm{max}(p) \leq |po| + R(y) \leq x+R(y)$,
  as claimed.
\end{proof}
\begin{lemma} \label{lem:ub}
The following inequality holds:
\begin{equation} \label{eq:ub}
  \len(T_\textrm{OPT}) \leq (n-1) \cdot \min \left(1, \frac{1 + x +R(y)}{2} \right). 
\end{equation}
\end{lemma}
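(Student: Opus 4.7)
The plan is to combine two bounds. The first is the trivial estimate $\len(T_\textrm{OPT}) \le n-1$ from~\eqref{eq:trivial_upper}, which supplies the ``$1$'' argument of the minimum in~\eqref{eq:ub}. The second is obtained by applying Lemma~\ref{lem:ub:generic} with a root $p_j$ whose region $X_j$ is \emph{not} contained in $\omega$; such a $j$ exists because $a \in X_1$ and $|ao| = 1/2 > x$, so $X_1$ itself qualifies.

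Having chosen such a $j$, I split the sum $\sum_{i \ne j} d_\textrm{max}(p_i)$ according to whether $X_i \subseteq \omega$ or not. In the former case Lemma~\ref{lem:3} gives $d_\textrm{max}(p_i) \le x + R(y)$; in the latter I use the bichromatic diameter bound $d_\textrm{max}(p_i) \le 1$. By the defining property of $\omega$, at least $\lfloor n/2 \rfloor$ indices $i \ne j$ fall into the first class, and therefore at most $\lceil n/2 \rceil - 1$ into the second. Adding up yields
\[
\len(T_\textrm{OPT}) \le \lfloor n/2 \rfloor \, (x+R(y)) + \lceil n/2 \rceil - 1.
\]

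A short parity calculation shows that the right-hand side equals $(n-1)(1+x+R(y))/2$ exactly when $n$ is odd, and exceeds it by precisely $(x+R(y)-1)/2$ when $n$ is even. Consequently, as long as $x+R(y) \le 1$ the displayed bound itself already delivers $\len(T_\textrm{OPT}) \le (n-1)(1+x+R(y))/2$; whenever $x+R(y) \ge 1$, the second argument of the minimum in~\eqref{eq:ub} is at least $1$, so the minimum equals $1$ and the trivial bound $n-1$ suffices. Intersecting the two estimates gives~\eqref{eq:ub}.

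The only mild technical nuisance is the even-$n$ case, where the estimate from Lemma~\ref{lem:ub:generic} alone falls short by $(x+R(y)-1)/2$ due to the floor/ceiling mismatch; the role of the outer minimum in~\eqref{eq:ub} is precisely to absorb this slack via the trivial bound in the regime $x+R(y)>1$. All other ingredients---the definition of $\omega$, the containment $V \subseteq \Omega$ from Lemma~\ref{lem:Omega}, and the per-vertex bound of Lemma~\ref{lem:3}---are already at hand.
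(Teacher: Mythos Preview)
Your proof is correct and follows essentially the same route as the paper: root the tree at $p_1$ (a region not contained in~$\omega$), apply Lemma~\ref{lem:ub:generic}, bound the $\lfloor n/2\rfloor$ terms coming from regions inside~$\omega$ via Lemma~\ref{lem:3}, bound the remaining terms by~$1$, and do the floor/ceiling bookkeeping. The only cosmetic difference is that the paper invokes the full conclusion $d_{\max}(p_i)\le\min(1,x+R(y))$ of Lemma~\ref{lem:3}, so the even-$n$ slack term becomes $(\min(1,x+R(y))-1)/2\le 0$ unconditionally and the bound $\frac{n-1}{2}(1+x+R(y))$ drops out directly before adjoining~\eqref{eq:trivial_upper}; you instead use the weaker $d_{\max}(p_i)\le x+R(y)$ and absorb the potentially positive slack $(x+R(y)-1)/2$ through the case split $x+R(y)\lessgtr 1$, which is equivalent.
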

\begin{proof}
  Let $T_\textrm{OPT}$ be a longest spanning tree of $p_1,\ldots,p_n$, where
  $p_i \in X_i$, for $i=1,\ldots,n$. View $T_\textrm{OPT}$ as rooted at $p_1 \in X_1$;
  recall that $a \in X_1$. By Lemma~\ref{lem:ub:generic},
  $$ \len(T_\textrm{OPT}) \leq \sum_{i=2}^n d_\textrm{max}(p_i). $$
  If $X_i$ is not contained in $\omega$, $d_\textrm{max}(p_i) \leq 1$;
  otherwise, by Lemma~\ref{lem:3}, $d_\textrm{max}(p_i) \leq \min(1,x+R(y))$. 
  By the setting of $x$ in the definition of $\omega$, we have
\begin{equation} \label{eq:ub1}
  \len(T_\textrm{OPT}) \leq \left(\left \lceil \frac{n}{2} \right \rceil -1\right) \cdot 1
  + \left \lfloor \frac{n}{2} \right \rfloor  \cdot \min(1,x+R(y)).
\end{equation}
If $n$ is even, Inequality~\eqref{eq:ub1} yields (since the second term in the second line
is $\leq 0$)
\begin{align*}
\len(T_\textrm{OPT}) &\leq \left(\frac{n}{2} -1\right) + \frac{n}{2} \cdot \min(1,x +R(y)) \\
&\leq \frac{n-1}{2} \left(1+x+R(y)\right) + \frac{\min(1,x+R(y))-1}{2}\\
&\leq \frac{n-1}{2} \left(1+x+R(y)\right). 
\end{align*}

If $n$ is odd, Inequality~\eqref{eq:ub1} yields
\begin{align*}
\len(T_\textrm{OPT}) &\leq \frac{n-1}{2} + \frac{n-1}{2} \left( x +R(y) \right)
= \frac{n-1}{2} \left(1+x+R(y)\right).
\end{align*}
Therefore the above inequality holds for every $n \geq 2$. The lemma follows
by adjoining the trivial upper bound in equation~\eqref{eq:trivial-ub}.
\end{proof}

\section{Analysis of {\tt Algorithm A2}} \label {sec:analysis}

We start with a preliminary argument for ratio $0.506$ that comes with a simpler proof.
We then give a sharper analysis for ratio $0.511$.

\paragraph{A preliminary estimate on the approximation ratio of {\tt Algorithm A2}.}
First consider the case $y<0$. Then $R(y)=\sqrt{3}/2$, so the ratio of {\tt Algorithm A2}
is at least
\begin{align*}
\min_{\small \begin{array}{c} 0 \leq x \leq 0.2\\y<0 \end{array}}
\frac{\len(T_2)}{\len(T_{\textrm{OPT}})}
&\geq \min_{\small \begin{array}{c} 0 \leq x \leq 0.2\end{array}}
\frac{1+\sqrt{1+4x^2}}{\min\left(4, 2+\sqrt{3}+2x\right)}.
\end{align*}
A standard analysis shows that this ratio achieves its minimum 
$\left(1+2\sqrt{2-\sqrt{3}}\middle)\right/4=0.508\ldots$ when $x=1-\sqrt{3}/2$.

When $y\geq 0$, the ratio of {\tt Algorithm A2} is at least
$$ \min_{\small \begin{array}{c} 0 \leq x,y \leq 0.2 \end{array}} 
\max \left( \frac{\len(T_1)}{\len(T_\textrm{OPT})}, \,
\frac{\len(T_2)}{\len(T_\textrm{OPT})} \right). $$

The inequalities~\eqref{eq:T1}, \eqref{eq:T2}, \eqref{eq:ub}
imply that this ratio is at least
\[
  \frac{\max \left( 1+y, (1 + \sqrt{1+4x^2})/2 \right)}
  {\min \left(2, 1+x +R(y) \right)} = \frac{\max \left( 1+y, (1 + \sqrt{1+4x^2})/2 \right)}
  {\min \left(2, 1 +\frac{\sqrt3}{2} +x + \frac{2}{\sqrt3}\,y \right)}.
\]
Since the analysis is similar to that for deriving the refined bound we give next,
we state without providing details that this piecewise function reaches its minimum
value
\[ \left(4\sqrt{3}-1-2\sqrt{9-3\sqrt{3}}\middle)\right/4=0.506\ldots \]
when
\[ y=\left(4\sqrt{3}-3-2\sqrt{9-3\sqrt{3}}\middle)\right/2=0.0137\ldots 
\text { and } x=\sqrt{3}/2-3+2\sqrt{3-\sqrt{3}}=0.1180\ldots \]
This provides a preliminary ratio 0.506 in Theorem~\ref{thm:1}.

\paragraph{A refined bound.}
Let $m=\lfloor n/2\rfloor$. 
Assume for convenience that the neighborhoods $X_3,\ldots,X_n$ are relabeled so that
$X_3,\ldots,X_{m+2}$ are contained in $\omega$ and 
$X_{m+3},\ldots,X_n$ are not contained in the interior of $\omega$.
Recall that $p_i \in X_i$ are the representative points in an optimal solution $T_\textrm{OPT}$.
Let $x_i =|op_i|$, for $i=3,\ldots,m+2$; as such, $x_3,\ldots,x_{m+2} \leq x$. 
Denote the average of $x_3,\ldots,x_{m+2}$ by $z$, \ie, $\sum_{i=3}^{m+2} x_i =m z$, 
and note that $z \leq x$.

As in the proof of Lemma~\ref{lem:3}, by the triangle inequality we have
$$ d_\textrm{max}(p_i) \leq |op_i| + R(y) = x_i+R(y), \text{ for } i=3,\ldots,m+2. $$
Consequently, the upper bound in~\eqref{eq:ub} can be improved to
\begin{equation} \label{eq:ub:new}
  \len(T_\textrm{OPT}) \leq (n-1) \cdot \min \left(1, \frac{1 + z +R(y)}{2} \right). 
\end{equation}

We next obtain an improved lower bound on $\len(T_2)$.
Recall that {\tt Algorithm A2} selects the vertex of $X_i$ that is farthest from $o$
for every $i \geq 3$, and connects it with $a$ or $b$, whichever yields the longer connection. 
In particular, the length of this connection is at least $\sqrt{\frac14 +x_i^2}$ for
$i=3,\ldots,m+2$. Let $h \colon [0,\infty) \to \RR$ be defined as follows
\begin{equation} \label{eq:h}
h(x) = \sqrt{1+4x^2}.
\end{equation}
It is easy to check that
\begin{equation} \label{eq:h'h''}
h'(x) = \frac{4x}{\sqrt{1+4x^2}} \geq 0  \text{  and  } 
h''(x) = \frac{4}{(1+4x^2)^{3/2}}>0.
\end{equation}
As such, the function $h(x)$ is convex, \ie,
\begin{equation} \label{eq:convex}
h\left(\frac{x+y}{2}\right) \leq \frac{h(x) + h(y)}{2}, \text{ for every } x,y \geq 0.
\end{equation}
and Jensen's inequality yields:
\begin{equation} \label{eq:Jensen}
  \sum_{i=3}^{m+2} \sqrt{1+4x_i^2} \geq m \sqrt{1+4z^2 }.
\end{equation}

We thereby obtain (noting that $z \leq x$) the following sharpening of the
lower bound in~\eqref{eq:T2}:
\begin{equation} \label{eq:lb:new}
  \len(T_2) \geq \frac{n-1}{4}\left( \sqrt{1+4z^2} + \sqrt{1+4x^2} \right)
  \geq \frac{n-1}{2}\sqrt{1+4z^2}.
\end{equation}

To analyze the approximation ratio we relate the upper bound~\eqref{eq:ub:new} to 
the lower bound~\eqref{eq:lb:new} and distinguish two cases:

\medskip
\emph{Case 1:} $y \leq 0$. 
Then $R(y)=\sqrt{3}/2$, 
so the ratio of {\tt Algorithm A2} is at least
\[
\min_{\small \begin{array}{c} 0 \leq z \leq 0.2 \end{array}} 
\frac{\len(T_2)}{\len(T_\textrm{OPT})} \geq
\min_{\small \begin{array}{c} 0 \leq z \leq 0.2 \end{array}} 
\frac{2\sqrt{1+4z^2}}
{\min \left(4, 2+ 2z + \sqrt{3} \right)}.
\]
When $4\leq 2+2z + \sqrt{3}$, we have $z \geq 1-\sqrt{3}/2$.
Then
\[
\frac{\sqrt{1+4z^2}}{2}\geq 
\frac{\sqrt{8-4\sqrt{3}}}{2}=\sqrt{2-\sqrt{3}}=0.517\ldots.
\]
When $2+2z + \sqrt{3} \leq 4$, \ie, $z \leq 1-\sqrt{3}/2$, let
$$ f(z)=\frac{2\sqrt{1+4z^2}}{2 + \sqrt3 + 2z}. $$
Then 
\[
f'(z) = \frac{8\left(2+\sqrt{3} \right)z-4}{\sqrt{1+4z^2}\left(2+\sqrt{3}+2z \right)^2}.
\]
Since 
$8\left(2+\sqrt{3} \right)z-4\leq 4\left(2+\sqrt{3} \right)\left(2-\sqrt{3} \right)-4=0$,
the function is non-increasing on $[0,1-\sqrt{3}/2]$ and so
\[
f(z)\geq f\left(1-\sqrt{3}/2\right)=\sqrt{2-\sqrt{3}}=0.517\ldots.
\]
This concludes the proof for the first case.

\medskip
\emph{Case 2:} $y \geq 0$, then the ratio of {\tt Algorithm A2} is at least
$$ \min_{\small \begin{array}{c} 0 \leq y,z \leq 0.2\end{array}} 
\max \left( \frac{\len(T_1)}{\len(T_\textrm{OPT})}, \,
\frac{\len(T_2)}{\len(T_\textrm{OPT})} \right). $$
For $0 \leq y, z \leq 0.2$, let
\[
g(z,y)=\frac{\max \left( 1+y, \sqrt{1+4z^2} \right)}{\min \left(2, 1+ z +R(y) \right)}
=\frac{\max \left( 1+y, \sqrt{1+4z^2}\right)}
{\min \left(2, 1 +\frac{\sqrt3}{2} +z + \frac{2}{\sqrt3}\,y \right)}.
\]

The inequalities~\eqref{eq:T1}, \eqref{eq:ub:new}, \eqref{eq:lb:new}
imply that the ratio of {\tt Algorithm A2} is at least
\[ \min_{\small \begin{array}{c} 0 \leq y,z \leq 0.2\end{array}}  g(z,y). \]

The curve $\gamma: 1+y=\sqrt{1+4z^2}$ and
the line $\ell: 2 = 1 +\frac{\sqrt3}{2} +z + \frac{2}{\sqrt3}\,y$
split the feasible region $[0, 0.2] \times [0, 0.2]$ into four subregions;
see Figure~\ref{fig:analysis2}.
\begin{figure}[hbtp]
  \centering
  \includegraphics[scale=0.4]{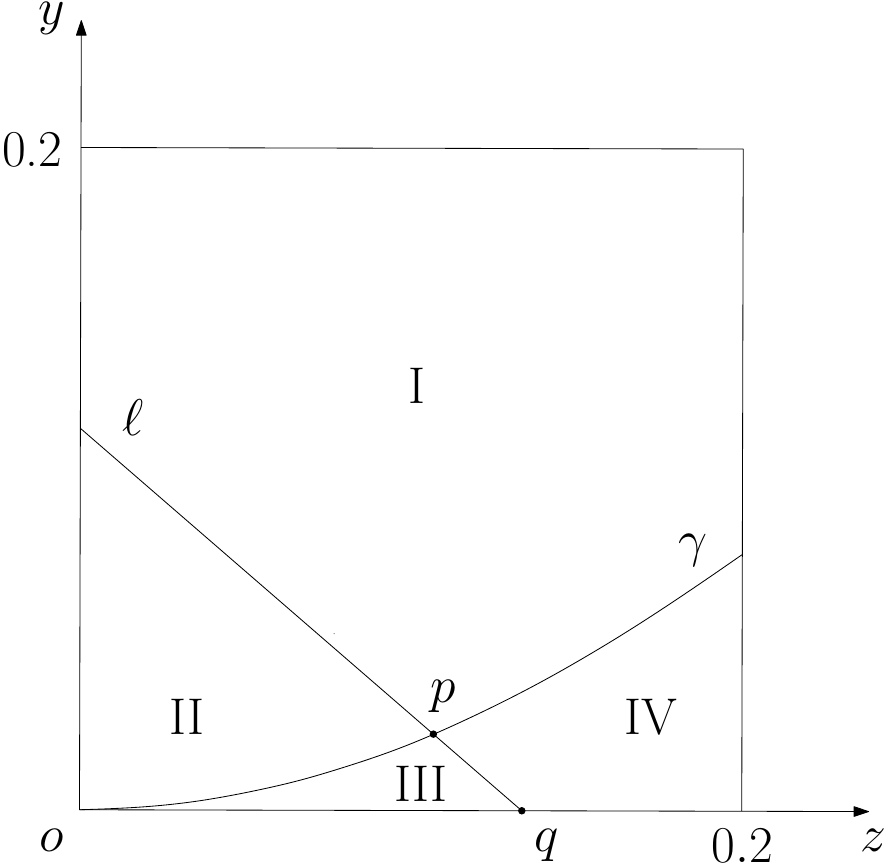}
\caption{The feasible region of the function $g(z,y)$.}
\label{fig:analysis2}
\end{figure}
The curve $\gamma$ intersects line $\ell$ at point $p=\left(z_0, y_0\right)$, where 
$z_0=\left(8\sqrt[4]{3}-\sqrt{3}-6\middle)\right/26=0.1075\ldots$ and
$y_0=\left(8\sqrt{3}-2\sqrt[4]{27}-9\middle)\right/13=0.0228\ldots$ 
Set
\begin{equation} \label{eq:rho}
\rho:= (1+y_0)/2 = \left(4\sqrt{3}+2-\sqrt[4]{27}\middle)\right/13= 0.511\ldots
\end{equation}

In region I, $g(z,y)=(1+y)/2$. It reaches the minimum value $\rho$
when $y$ is minimized, \ie, $y=y_0$.

In region II, $g(z,y)=\dfrac{1+y}{1 +\sqrt{3}/2 +z + 2y/\sqrt{3}}$. Its partial derivative
is positive, \ie,
$$ \pdv{g}{y} =\frac{1-\sqrt{3}/6+z}{\left(1+\sqrt{3}/2+z+2y/\sqrt{3}\right)^2}>0, $$
so $g(z,y)$ reaches its minimum value on the curve $\gamma$.
On this curve, let
\[ G(z)= g\left(z,y(z)\right)=\frac{\sqrt{1+4z^2}}{1-\sqrt{3}/6+z+2\sqrt{1+4z^2}/\sqrt{3}}. \]
Its derivative is
\[ G'(z)=\frac{\left(4-2\sqrt{3}/3\right)z-1}
   {\sqrt{1+4z^2}\left(1-\sqrt{3}/6+z+2\sqrt{1+4z^2}/\sqrt{3}\right)^2}. \]
   Note that the numerator of $G'(z)$ is negative, \ie,
   $\left(4-2\sqrt{3}/3\right)z-1<4z-1<0$ for $z\in[0, 0.2]$, thus $G'(z)<0$.
So the minimum value is $\rho$, and is achieved when $z$ is maximized, \ie, $z=z_0$.

In region IV, $g(z,y)=\sqrt{1+4z^2}/2$ which increases monotonically with respect to $z$.
So the minimum value is again $\rho$ and is achieved when $z$ is minimized, \ie, $z=z_0$.

In region III,
$$g(z,y)=\frac{\sqrt{1+4z^2}}{1 +\sqrt{3}/2 +z + 2y/\sqrt{3}}. $$
Its partial derivative is negative, \ie, 
$$ \pdv{g}{y}=\frac{-2\sqrt{1+4z^2}}{\sqrt{3}\left(1 +\sqrt{3}/2 +z + 2y/\sqrt{3}\right)^2}<0,$$ 
so $g(z,y)$ reaches its minimum value on the arc $op \subset \gamma$ or the segment
$pq \subset \ell$,
where $q=(1-\sqrt{3}/2, 0)$ is the intersection point of $\ell$ and the $z$-axis. 
Since these two curves are shared with region II and IV respectively, by previous analyses,
$g(z,y)$ reaches its minimum value $\rho$ at point $p$.

In summary, we showed that
$$ \min_{\small \begin{array}{c} 0 \leq y, z \leq 0.2 \end{array}} g(z,y) \geq \rho= 0.511\ldots, $$
establishing the approximation ratio in Theorem~\ref{thm:1}.
\qed

\paragraph{Remarks.} 1. The algorithm can be adapted to work in $\RR^d$ for any $d \geq 3$.
In the analysis, the disk $\omega$ becomes the ball of radius $x$ with the
same defining property and the disk $\Omega$ becomes the ball of radius $R(y)$. 
All arguments and relevant bounds still hold since they only rely on the triangle inequality;
the verification is left to the reader. Consequently, the approximation guarantee remains
the same. 

\smallskip
2. It is apparent from the context that our methods extend to a broader class of neighborhoods,
namely those that are approximable within a prescribed accuracy by unions of polyhedra
(this class includes curved objects, for instance balls of arbitrary radii).

\paragraph{An almost tight example.}
Let $\Delta{abc}$ be an isosceles triangle with $|ca|=|cb|=1-\eps$, $|ab|=1$,
for a small $\eps>0$, \eg, set $\eps=1/(n-1)$. Let $\N=\{X_1,\ldots,X_n\}$, where
$X_1=ac$, $X_2=bc$, and $X_3,\ldots,X_n$ are $n-2$ points at distance $1-\eps$ from $c$,
below $ab$ and whose projections onto $ab$ are close to the midpoint of $ab$; see
Figure~\ref{fig:eg}. Note that $ab$ is the unique (bichromatic) diameter of $V$
(the set of vertices in $\N$).  {\tt Algorithm A2} selects $a \in X_1$, $b \in X_2$,
and $X_3,\ldots,X_n$. 

\begin{figure}[hbtp]
  \centering
  \includegraphics[scale=0.5]{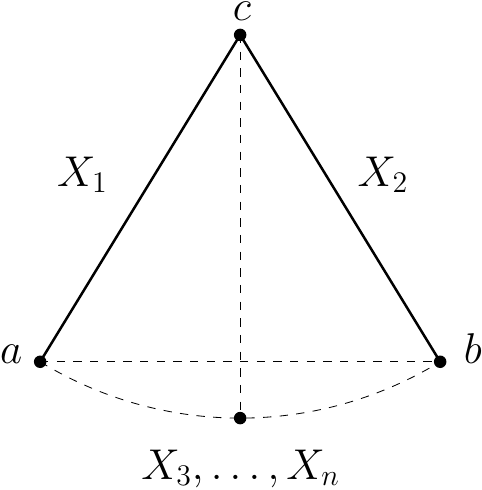}
\caption{A tight example.}
\label{fig:eg}
\end{figure}

The edge $ab$ has unit length and each of the remaining edges has length close to
$\sqrt{2-\sqrt3}$. Therefore,  the spanning tree constructed by {\tt Algorithm A2}
is of length close to
$1 + \sqrt{2-\sqrt3} (n-2)$, while the longest spanning tree has length
at least $(1-\eps)(n-1)=n-2$. As such, the approximation ratio of {\tt Algorithm A2}
approaches $\sqrt{2-\sqrt3}=0.517\ldots $ for large $n$.
Note that this is a tight example for the case $y \leq 0$, for which the ratio of {\tt Algorithm A2}
is at least $\sqrt{2-\sqrt3}$; and an almost tight example in general, since the overall
approximation ratio of {\tt Algorithm A2} is $0.511$.
Moreover, the example shows that every algorithm that always includes a bichromatic
diameter pair in the solution (as the vertices of the corresponding neighborhoods)
is bound to have an approximation ratio at most $\sqrt{2-\sqrt3}$.

\paragraph{Time complexity of {\tt Algorithm A2}.}
It is straightforward to implement the algorithm to run in quadratic time
for any fixed $d$.  All interpoint distances can be easily computed in $O(N^2)$ time.
Similarly the farthest point from $o$ in each neighborhood (over all neighborhoods)
can all be computed in $O(N)$ time. 
Subquadratic algorithms for computing the diameter and farthest bichromatic pairs
in higher dimensions can be found in~\cite{AMS92,BT83,Ra01,Ro93,TM82}; see also
the two survey articles~\cite{Ep00,Mi00}.

\section{Conclusion} \label {sec:conclusion}

We gave two simple approximation algorithms for {\sc Max-St-N}: one with ratio $0.5$
and one with ratio $0.511$
(the latter with a slightly more elaborate analysis but equally simple principles).
The first algorithm outputs a star centered at one of the endpoints of a bichromatic diameter.
The second algorithm outputs either a star centered at one of the endpoints of a monochromatic
diameter or a 2-star with the endpoints of a bichromatic diameter as its centers.
A 2-\emph{star} with \emph{centers} $a, b$ consists of an edge connecting $a, b$
and $n-2$ edges connecting every other vertex with one of the two centers.

The following variants represent extensions of the Euclidean maximum TSP for the
neighborhood setting. 
In the \emph{Euclidean Maximum Traveling Salesman Problem}, given a set of points
in the Euclidean space $\RR^d$, $d \geq 2$,
one seeks a cycle (a.k.a. \emph{tour}) that visits these points (as vertices) 
and has maximum length; see~\cite{BFJ+03}. 
In the \emph{Maximum Traveling Salesman Problem with Neighborhoods} ({\sc Max-Tsp-N}),
each point is replaced by a point-set, called  \emph{neighborhood} (or \emph{region}),
and the cycle must connect $n$ representative points, one chosen from each neighborhood
(duplicate representatives are allowed), and the cycle has maximum length. 
Since the original variant with points is NP-hard when $d \geq 3$ (as shown in~\cite{BFJ+03}),
the variant with neighborhoods is also NP-hard for $d \geq 3$.
The complexity of the original problem in the plane is unsettled, although the problem
is believed to be NP-hard~\cite{Fe99}.
In the \emph{path} variant, one seeks a path of maximum length. 

\smallskip
The following problems remain open for future investigation:
\begin{enumerate} \itemsep 2pt
\item What is the computational complexity of {\sc Max-St-N}?
\item Can a better approximation be obtained by constructing candidate spanning trees
  in the form of a $3$-star? (A $3$-\emph{star} with \emph{centers} $a,b,c$ consists of
  two edges connecting $a,b,c$ and $n-3$ edges connecting every other vertex with one
  of the $3$ centers.)
\item What approximations can be obtained for the \emph{cycle} or \emph{path} variants of
  {\sc Max-Tsp-N}?
\end{enumerate}


\begin{thebibliography}{99}

\bibitem {AMS92} P. K. Agarwal, J. Matou{\v s}ek and S. Suri,  
Farthest neighbors, maximum spanning trees and related problems in higher dimensions,
\emph{Computational Geometry: Theory and Applications} \textbf{1} (1992), 189--201.

\bibitem {ARS95} N. Alon, S. Rajagopalan and S. Suri,
Long non-crossing configurations in the plane,
{\em Fundamenta Informaticae} {\bf 22} (1995), 385--394.

\bibitem{AH94}
E.~M. Arkin and R.~Hassin,
Approximation algorithms for the geometric covering salesman problem,
{\em Discrete Applied Mathematics} \textbf{55} (1994), 197--218.

\bibitem{BFJ+03}
A.~I.~Barvinok, S.~Fekete, D.~S.~Johnson, A.~Tamir, G.~J.~Woeginger, and R.~Woodroofe,
The geometric maximum traveling salesman problem,
\emph{Journal of the ACM} \textbf{50(5)} (2003), 641--664.

\bibitem{BE97} M. Bern and D. Eppstein,
Approximation algorithms for geometric problems,
in \emph{Approximation Algorithms for NP-hard Problems}
(D. S. Hochbaum, editor),
PWS Publishing Company, Boston, MA, 1997, pp.~296--345.

\bibitem {BT83} B. K. Bhattacharya and G. T. Toussaint,
Efficient algorithms for computing the maximum distance between two finite planar sets, 
\emph{Journal of Algorithms} \textbf{4} (1983), 121--136.

\bibitem {BBC+18} 
A. Biniaz, P. Bose, K. Crosbie, J.-L. De Carufel, D. Eppstein, A. Maheshwari, and M. Smid,
Maximum plane trees in multipartite geometric graphs,
\emph{Algorithmica} \textbf{81(4)} (2019), 1512--1534.

\bibitem {DFH+15} R. Dorrigiv, R. Fraser, M. He, S. Kamali, A. Kawamura,
  A. L{\'{o}}pez{-}Ortiz, and D. Seco,
On minimum- and maximum-weight minimum spanning trees with neighborhoods,
\emph{Theory of Computing Systems} \textbf{56(1)} (2015), 220--250.

\bibitem{DT10}
A. Dumitrescu and Cs. D. T\'oth,
Long non-crossing configurations in the plane,
\emph{Discrete and Computational Geometry} \textbf{44(4)} (2010), 727--752.

\bibitem {Ep00} D. Eppstein,
Spanning trees and spanners,
in \emph{Handbook of Computational Geometry}
(J.-R.~Sack and J.~Urrutia, editors),
Elsevier Science, Amsterdam, 2000, pp.~425--461.

\bibitem {Fe99} S.~Fekete,
Simplicity and hardness of the maximum traveling salesman problem under geometric distances,
\emph{Proc. 10th Annual ACM-SIAM Symposium on Discrete Algorithms},
ACM/SIAM, 1999, pp.~337--345. 


\bibitem {Mi00} J.~S.~B. Mitchell,
Geometric shortest paths and network optimization,
in \emph{Handbook of Computational Geometry}
(J.-R. Sack and J.~Urrutia, editors),
Elsevier Science, Amsterdam, 2000, pp.~633--701.

\bibitem{Mi17} J.~S.~B. Mitchell,
Shortest paths and networks,
in \emph{Handbook of Computational Geometry}
  (J.~E.~Goodman, J.~O'Rourke, and C.~D.~T\'oth, editors),
3rd edition, CRC Press, Boca Raton, FL, 2017, pp.~811--848.

\bibitem {MPSY90} C. Monma, M. Paterson, S. Suri and F. Yao,
Computing {Euclidean} maximum spanning trees,
\emph{Algorithmica} \textbf{5} (1990), 407--419.

\bibitem{PA95}
J. Pach and P.~K.~Agarwal,
\emph{Combinatorial Geometry},
John Wiley, New York, 1995.

\bibitem{PS85} F. P. Preparata and M. I. Shamos,
\emph{Computational Geometry},
Springer-Verlag, New York, 1985.

\bibitem{Ra01} E. A. Ramos,
An optimal deterministic algorithm for computing the diameter of a three-dimensional point set,
\emph{Discrete and Computational Geometry} \textbf{26(2)} (2001), 233--244.

\bibitem {Ro93} J. M. Robert,
Maximum distance between two sets of points in $\mathbb{R}^d$,
\emph{Pattern Recognition Letters} \textbf{14} (1993), 733--735.

\bibitem {TM82} G. T. Toussaint and M. A. McAlear,
A simple $O(n \log n)$ algorithm for finding the maximum distance 
between two finite planar sets, 
\emph{Pattern Recognition Letters} \textbf{1} (1982), 21--24.

\bibitem {YLXX07} Y. Yang, M. Lin, J. Xu, and Y. Xie,
Minimum spanning trees with neighborhoods,
In \emph{Proc. 3rd Int. Conf. on Algor. Aspects in Information and Management} (AAIM),
vol.~4508 of LNCS, Springer, 2007, pp.~306--316.

\end{thebibliography}
\end{document}